\documentclass{comjnl}

\usepackage{algorithm}
\usepackage{algorithmicx}
\usepackage{algpseudocode}
\usepackage{amsmath}
\allowdisplaybreaks
\usepackage{amssymb}
\usepackage{latexsym}
\usepackage{siunitx}
\usepackage{bm}
\usepackage{nicefrac}
\usepackage{booktabs}
\usepackage{array}
\usepackage{multirow}
\usepackage{threeparttable}
\usepackage{makecell}
\usepackage{stfloats}
\usepackage{graphicx}
\usepackage{color}

\usepackage{tabularx}
\usepackage{siunitx}

\newcommand\AAA{\boldsymbol{\mathit{A}}}
\newcommand\FF{\boldsymbol{\mathit{F}}}
\newcommand\RR{\boldsymbol{\mathit{R}}}
\newcommand\DD{\boldsymbol{\mathit{D}}}
\newcommand\LL{\boldsymbol{\mathit{L}}}
\newcommand\WW{\boldsymbol{\mathit{W}}}
\newcommand\II{\boldsymbol{\mathit{I}}}
\newcommand\avec{\boldsymbol{\mathit{a}}}
\newcommand\rankvec{\boldsymbol{\mathit{rank}}}
\newcommand\wvec{\boldsymbol{\mathit{w}}}
\newcommand\ee{\boldsymbol{\mathit{e}}}
\newcommand\SE{\equiv}
\newcommand\calF{\mathcal{F}}
\newcommand\calS{\mathcal{S}}

% Algorithm2e
%\DontPrintSemicolon
%\SetKw{KwAnd}{and}
%\SetKw{KwOr}{or}
%\SetProcNameSty{textsc}
%\SetFuncSty{textsc}
%\SetKwInOut{Require}{Require}
%\SetKwInOut{Input}{Input\ \ \ \ }
%\SetKwInOut{Output}{Output}

\newcommand\blfootnote[1]{%
	\begingroup
	\renewcommand\thefootnote{}\footnote{#1}%
	\addtocounter{footnote}{-1}%
	\endgroup
}

%\copyrightyear{2009} \vol{00} \issue{0} \DOI{000}

\begin{document}

\title[Role Similarity Metric Based on Spanning Rooted Forest]{Role Similarity Metric Based on Spanning Rooted Forest}
\author{Qi~Bao}\author{Zhongzhi~Zhang}

\author{Haibin~Kan\thanks{}}\email{\{20110240002, zhangzz, hbkan\}@fudan.edu.cn}

\affiliation{Shanghai Key Laboratory of Intelligent Information
	Processing, School of Computer Science, Fudan University, Shanghai 200433, China}
%\affiliation{Shanghai Key Laboratory of Intelligent Information Processing, School of Computer Science, Fudan University, Shanghai 200433, China} %\email{\{20110240002, xuwy, zhangzz\}@fudan.edu.cn}\email{zhangzz@fudan.edu.cn}

\shortauthors{Q. Bao, Z. Zhang, and H. Kan}

%\received{00 January 2009}
%\revised{00 Month 2009}

%\category{C.2}{Computer Communication Networks}{Computer Networks}
%\category{C.4}{Performance of Systems}{Analytical Models}
%\category{G.3}{Stochastic Processes}{Queueing Systems}
%\terms{Internet Technologies, E-Commerce}
\keywords{Node Similarity; Role Similarity; Scalability}

\begin{abstract}
As a fundamental issue in network analysis, structural node similarity has received much attention in academia and is adopted in a wide range of applications. Among these proposed structural node similarity measures, role similarity stands out because of satisfying several axiomatic properties including automorphism conformation. Existing role similarity metrics cannot handle top-k queries on large real-world networks due to the high time and space cost. In this paper, we propose a new role similarity metric, namely \textsf{ForestSim}. We prove that \textsf{ForestSim} is an admissible role similarity metric and devise the corresponding top-k similarity search algorithm, namely \textsf{ForestSimSearch}, which is able to process a top-k query in $O(k)$ time once the precomputation is finished. Moreover, we speed up the precomputation by using a fast approximate algorithm to compute the diagonal entries of the forest matrix, which reduces the time and space complexity of the precomputation to $O(\epsilon^{-2}m\log^5{n}\log{\frac{1}{\epsilon}})$ and $O(m\log^3{n})$, respectively. Finally, we conduct extensive experiments on 26 real-world networks. The results show that \textsf{ForestSim} works efficiently on million-scale networks and achieves comparable performance to the state-of-art methods.
\end{abstract}

\maketitle

\section{Introduction}
Structural node similarity is a fundamental issue in graph analysis~\cite{LuJa07, JeWi02, FuNa13, YuLi13, JiLe11, ZhTa15, XiGo19, WeDu19, MoZe16} and is adopted in a wide scope of applications. For example, node similarity has been used for role discovery~\cite{WaFa94, SeBo03, HeGa12}, protein function prediction~\cite{MiPr08, DaYa15}, anomaly detection~\cite{RoGa13, AkCh13}, recommender system~\cite{FoPi07, LiHu18, SyTi10, GuGu16}, and community detection~\cite{PaLi10, SaMo18}. Due to the vast applications of node similarity, many measures have been proposed to estimate the similarity between vertices~\cite{JiLe11,JeWi02,PaYa04,HeGa11}. Among these works, role similarity~\cite{JiLe14, JiLe11,LiQi15,ShLi19,YuIr20,ChLa20} stands out because of satisfying several axiomatic properties, including \textit{automorphism confirmation}, which ensures that the similarity score between two nodes that are automorphically equivalent is $1$ (the maximum similarity score). Thus, role similarity measures have the good merit of indicating automorphism, which keeps these measures from assigning counter-intuitive similarity scores.

\blfootnote{*Haibin Kan is  the corresponding author, who is also with Shanghai Engineering Research Institute of Blockchain, Shanghai 200433, China.}

\textsf{RoleSim}~\cite{JiLe11} and \textsf{StructSim}~\cite{ChLa20} are state-of-art role similarity metrics. However, these two measures have bottlenecks when finding the top-k similar nodes for a given node in a large network, which is the task that receives more interest from the end-users~\cite{LeLa12,ZhTa15,LiZh17,SuHa11,DiLi17,ZhTa17}. For an undirected unweighted graph with $n$ nodes and $m$ edges, \textsf{RoleSim} needs $O(n^2)$ space, which makes it unable to handle large graphs. Although \textsf{StructSim} only needs $O(nL\log{n})$ space, it still requires $O(nm)$ time to compute the indexes in the worst case. Moreover, \textsf{StructSim} needs $O(nL\log{n})$ time to process a top-k query, which is inefficient in large networks. Therefore, it is challenging and of great significance to propose a role similarity metric that only requires nearly linear time and space for precomputation and can finish a top-k similarity search in $O(k)$ time.

In this paper, on the basis of spanning rooted forests, we propose \textsf{ForestSim}, a new node similarity metric. \textsf{ForestSim} uses the average size of the trees rooted at the node $u$ in spanning rooted forests of the graph, denoted by $s(u)$, to capture its structural properties. Two nodes $u$ and $v$ are similar if $s(u)$ and $s(v)$ are close, while they are dissimilar if $s(u)$ and $s(v)$ are far apart. We further devise the corresponding top-k similarity search algorithm, namely \textsf{ForestSimSearch}. Once the precomputation is finished, \textsf{ForestSimSearch} can find the top-k similar nodes for a given node in $O(k)$ time. However, getting the whole forest matrix still needs $O(n^3)$ time and $O(n^2)$ space. In order to speed up the computation, we use the fast approximate algorithm to calculate the diagonal elements of the forest matrix, which can finish the precomputation within $O(\epsilon^{-2}m\log^5{n}\log{\frac{1}{\epsilon}})$ time and $O(m\log^3{n})$ space. Finally, we conduct extensive experiments on 26 real-world networks. Resluts show that \textsf{ForestSim} works as effective as other state-of-art role similarity metrics and is the only one that can process top-k queries on million-scale networks.

Our contributions are listed as follows.
\begin{itemize}
	\item[C1.] \textbf{New role similarity metric} : We propose \textsf{ForestSim}, a novel node similarity measure based on spanning rooted forests. We prove that \textsf{ForestSim} is an admissible role similarity metric.
	\item[C2.] \textbf{Efficient top-k similarity search algorithm} : We devise \textsf{ForestSimSearch} for the top-k similarity search. \textsf{ForestSimSearch} can handle a top-k query in $O(k)$ time once the precomputation is finished. Furthermore, we use the fast approximate algorithm to compute the diagonal entries of the forest matrix, which enables the precomputation to be finished within nearly linear time and nearly linear space.
	\item[C3.] \textbf{Extensive experimental studies} : We test the effectiveness of studied role similarity metrics on six labeled networks and estimate their efficiency on 20 real-world networks, including several large networks. For effectiveness, \textsf{ForestSim} achieves comparable performance to \textsf{RoleSim} and better results than \textsf{StructSim}. For efficiency, \textsf{ForestSim} is the unique role similarity metric that can process top-k queries within 4 hours on million-scale graphs.
\end{itemize}

\iffalse
The rest of the paper is organized as follows. We review some related works in Section~\ref{S-RelatedWork}. In Section~\ref{S-Pre}, we introduce some basic concepts related to our work. We propose our method in Section~\ref{S-ForestSim}. In Section~\ref{S-Complex}, we analyze the time and space complexity of ForestSim and other baseline measures. We present experimental studies in Section~\ref{S-Exp} and conclude in Section~\ref{S-Conclude}.
\fi

\section{Related Work}~\label{S-RelatedWork}
In this section, we review two state-of-art role similarity metrics: \textsf{RoleSim} and \textsf{StructSim}.

\subsection{RoleSim}
Jin et al.~\cite{JiLe11} first defined the five axiomatic properties, including automorphism confirmation, that an admissible role similarity metric should satisfy. They also proposed \textsf{RoleSim}, the first admissible role similarity metric. \textsf{RoleSim} computes all-pairs of node similarity iteratively. The time complexity of computing \textsf{RoleSim} is very high since it requires a costly maximum matching during the computation. Moreover, computing \textsf{RoleSim} requires $O(n^2)$ space, which makes it unable to handle large real-world networks.

\subsection{StructSim}
The main drawback of \textsf{RoleSim} is the $O(n^2)$ space occupation. To solve this problem, Cheng et al.~\cite{ChLa20} proposed \textsf{StructSim}. \textsf{StructSim} that follows the hierarchical scheme and only requires $O(nL\log{n})$ space, where $L$ is the number of levels defined by users. Furthermore, once the precomputation is finished, it only takes $O(L\log{n})$ to compute the similarity score between a pair of given nodes, which makes it able to handle ad-hoc queries on billion-scale networks. However, \textsf{StructSim} is inefficient in handling top-k queries. To find the top-k similar nodes for a given node, \textsf{StructSim} needs to calculate all similarity scores between all other vertices and the given node, which takes $O(nL\log{n})$ time.

\section{Preliminary}~\label{S-Pre}
In this section, we briefly introduce some basic concepts about graph, role similarity metric, and spanning rooted forest.

\subsection{Graph and Related Matrices}
Let $G=(V,E)$ be an undirected unweighted simple graph, where $V=\{1,\dots, n\}$ is the vertex set, and $E=\{(u_1, v_1), \dots, (u_m, v_m)\}$, the edge set. Then, the graph $G$ contains $n = |V|$ vertices and $m = |E|$ edges. The connections of $G$ are encoded in its adjacency matrix $\AAA$, with $a_{ij}$ representing the adjacency relation between vertices $i$ and $j$. If $(i,j) \in E$, then $a_{ij} = a_{ji} = 1$; $a_{ij} = a_{ji} = 0$ otherwise. The degree of a vertex $i$ is $d_i = \sum_{j=1}^n a_{ij}$. The diagonal degree matrix of $G$ is defined as $\DD = {\rm diag}(d_1, d_2, \dots, d_n)$ and the Laplacian matrix is defined to be $\LL = \DD - \AAA$.

\subsection{Role Similarity Metric} 

To theoretically describe the role similarity measure, Jin \textit{et al.}~\cite{JiLe11} formulated a series of axiomatic properties that role similarity measures should obey.

Before introducing these axiomatic properties, we first review the concept of \textit{automorphism}. In a simple graph $G=(V,E)$, an automorphism is a bijection $f:V\rightarrow V$ such that for every edge $(u, v) \in E$, there is $(f(u), f(v)) \in E$. Two nodes $u,v \in V$ are automorphically equivalent, denoted by $u \SE v$, if and only if there exists an automorphism $f$ of $G$ satisfying $f(u) = v$. 

We next give the five axiomatic properties as follows.
\begin{definition}\label{def-axiom}
In a graph $G=(V,E)$, let $\textsf{Sim}(u, v)$ be the similarity score between any two nodes $u,v \in V$. Five axiomatic properties of role similarity are shown as follows:
\begin{enumerate}
	\item[P1.] Range: For any $u, v \in V$, there is $\textsf{Sim}(u, v) \in [0,1]$.
	\item[P2.] Symmetry: $\textsf{Sim}(u, v) = \textsf{Sim}(v, u)$ holds for any $u, v \in V$.
	\item[P3.] Automorphism confirmation: If $u \SE v$, there is $\textsf{Sim}(u, v) = 1$.
	\item[P4.] Transitive similarity: For three nodes $u, u', v \in V$, there is $\textsf{Sim}(u, v) = \textsf{Sim}(u', v)$ if $u \SE u'$.
	\item[P5.] Triangle inequality: For any $u, u', v \in V$, there is $\textsf{Dist}(u,v) \leq \textsf{Dist}(u,u') + \textsf{Dist}(u',v)$, where $\textsf{Dist}(u,v) = 1 - \textsf{Sim}(u,v)$.
\end{enumerate}
\end{definition}
If \textsf{Sim} obeys P1, P2, P3, and P4, it is an \textbf{admissible role similarity measure}. \textsf{Sim} is an \textbf{admissible role similarity metric} if it satisfies all five properties.

\begin{figure}
	\centering
	\includegraphics[width=0.9\linewidth]{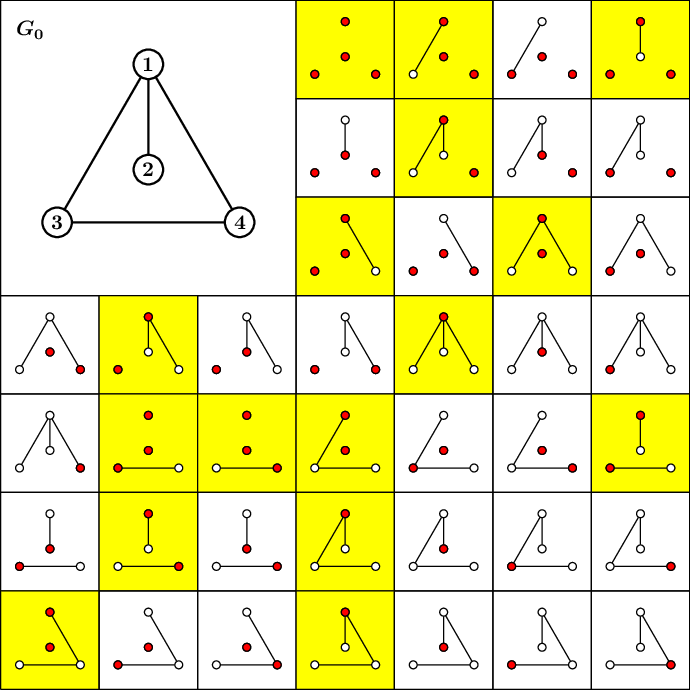}
	\caption{The toy graph $G_0$ and its all $40$ spanning rooted forests. $16$ forests in $\calF_{11}$ are marked in yellow.}\label{F1}
\end{figure}

\subsection{Spanning Rooted Forest}

Spanning rooted forests have pertained to multiple applications in graph mining~\cite{FoPi07, SeGa13, JiBaZh19, VaAn21}. For a graph $G=(V,E)$, a spanning forest $\mathcal{G}=(V',E')$ is an acyclic subgraph of $G$ with $V'=V$ and $E' \subseteq E$. Each connected component of $\mathcal{G}$ is a tree since the spanning forest has no graph cycles. A spanning rooted forest is a spanning forest with a root marked in each tree. Let $\calF = \calF(G)$ be the set containing all spanning rooted forests of $G$. Let $\calF_{ij} = \calF_{ij}(G)$ be the set including all spanning rooted forests of $G$, where nodes $i$ and $j$ are in the same tree rooted at $i$. Then, $\calF_{ii}$ is the set of all spanning rooted forests of $G$, where the node $i$ is a root. For example, for the graph $G_0 = (\{1,2,3,4\}, \{(1,2), (1,3), (1,4), (3,4)\})$ in Figure~\ref{F1}, it has $|\calF|=40$ spanning rooted forests, among which there are $|\calF_{11}|=16$ forests where the vertex $1$ is a root.

We then introduce the forest matrix, denoted by $\WW$~\cite{ChPa08, GoDr81}. The forest matrix is defined by $\WW = (\II + \LL)^{-1} = (w_{ij})_{n \times n}$, where $\II$ is the identity matrix. By definition, there is $|\calF| = |\II + \LL|$ and $w_{ij} = |\calF_{ij}|/|\calF|$~\cite{MeRu97, ZhXi11}. From Figure~\ref{F1}, the forest matrix of $G_0$ is 
\begin{align*}
	\WW = \frac{1}{40} \begin{pmatrix}
		16 & 8 & 8 & 8 \\
		8 & 24 & 4 & 4 \\
		8 & 4 & 19 & 9 \\
		8 & 4 & 9 & 19 \\
	\end{pmatrix}
\end{align*}
By definition, for each $w_{ij}$, there is $w_{ij} \in [0, 1]$.

We finally give some properties of diagonal elements in $\WW$.
\begin{lemma}\label{wuu-range}
	In a graph $G=(V,E)$, let $\WW$ be the forest matrix. Then, $w_{uu} \in \left[\frac{1}{d_u +1}, \frac{2}{d_u +2}\right]$ holds for any $u \in V$~\cite{ZhXi11}.
\end{lemma}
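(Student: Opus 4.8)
The plan is to reason entirely with the forest‑count identity $w_{uu}=|\calF_{uu}|/|\calF|$ recorded above, so that the two claimed bounds become $|\calF|\le(d_u+1)\,|\calF_{uu}|$ and $2\,|\calF|\ge(d_u+2)\,|\calF_{uu}|$. (If $d_u=0$ then $u$ is isolated, $\calF_{uu}=\calF$, $w_{uu}=1$, and both estimates hold with equality; so assume $d_u\ge 1$.) For the first inequality I would partition $\calF$ by the status of $u$: either $u$ is a marked root, and these forests are exactly $\calF_{uu}$, or $u$ has a unique parent in its tree, which must be one of its $d_u$ neighbours. Writing $\mathcal{P}_v$ for the set of forests in which the neighbour $v$ is the parent of $u$, we get $\calF=\calF_{uu}\cup\bigcup_{v\sim u}\mathcal{P}_v$ as a disjoint union. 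Deleting the edge $\{u,v\}$ maps $\mathcal{P}_v$ into $\calF_{uu}$ — it detaches the subtree hanging below $u$ and makes $u$ a root — and re‑inserting $\{u,v\}$ inverts this, so the map is injective. Hence $|\mathcal{P}_v|\le|\calF_{uu}|$, and summing over the $d_u$ neighbours gives $|\calF|\le(d_u+1)\,|\calF_{uu}|$, i.e. $w_{uu}\ge 1/(d_u+1)$.

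For the upper bound I would rewrite the target as $d_u\,|\calF_{uu}|\le 2\bigl(|\calF|-|\calF_{uu}|\bigr)=2\sum_{v\sim u}|\mathcal{P}_v|$ and prove it with an injection from the $d_u\,|\calF_{uu}|$ pairs $(F,v)$ with $F\in\calF_{uu}$ and $v\sim u$ into $(\calF\setminus\calF_{uu})\times\{1,2\}$. Let $T_u$ be the tree of $F$ that contains $u$. If $v\notin T_u$, add the edge $\{u,v\}$ and keep the root of $v$'s old tree as the root of the merged tree; since $u$ and $v$ were in different trees the result is still a spanning rooted forest, $u$ is no longer a root, and I output it with tag $1$. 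If $v\in T_u$, instead move the marked root of the single tree $T_u$ from $u$ to $v$; again $u$ is no longer a root, and I output it with tag $2$. Both outputs lie in $\calF\setminus\calF_{uu}$.

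What remains is injectivity, and this is exactly where the tag earns its keep, since the two constructions can yield the same underlying forest (for instance, when $u$ sits on a triangle the same forest arises both by case $1$ from one pair and by case $2$ from another). Given a tagged output the tag names the case. In case $1$, $u$ acquired a unique parent edge — it had none in $F$ — and deleting it returns both $F$ and $v$. In case $2$, $v$ is forced to be the root of the tree containing $u$, so moving that root back to $u$ returns $F$, every other tree having been left alone. Thus distinct pairs yield distinct tagged outputs, which gives $d_u\,|\calF_{uu}|\le 2\bigl(|\calF|-|\calF_{uu}|\bigr)$ and hence $w_{uu}\le 2/(d_u+2)$.

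I expect the part worth writing out in full to be the routine‑but‑fiddly checks that the two operations on rooted forests are well defined and reversible (adding an edge across two distinct trees, keeping one of their roots, leaves one valid rooted tree; relabelling the root of a single tree is an involution), together with the injectivity bookkeeping above. A slicker but heavier alternative is to identify $w_{uu}$ with the effective resistance between $u$ and a universal ``ground'' vertex $0$ in the graph $G^{+}$ obtained by joining a new vertex $0$ to every vertex of $G$, for which $\II+\LL$ is the Laplacian of $G^{+}$ with the row and column of $0$ deleted: the direct edge $\{u,0\}$ and the $d_u$ length‑two paths $u\!-\!v\!-\!0$ ($v\sim u$) are edge‑disjoint, so the effective conductance is at least $1+d_u/2$, while the $d_u+1$ edges incident to $u$ form a cut of conductance $d_u+1$, and Rayleigh monotonicity gives $1/(d_u+1)\le w_{uu}\le 2/(d_u+2)$ directly. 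Since the excerpt sets up the forest‑count picture but no electrical‑network machinery, I would make the combinatorial argument the main proof.
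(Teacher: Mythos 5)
Your proof is correct, but note that the paper itself does not prove this lemma at all --- it is imported by citation from [ZhXi11], so any comparison is with that external source rather than with an argument in the text. What you give is a self-contained combinatorial derivation from the identity $w_{uu}=|\calF_{uu}|/|\calF|$ that the paper does state: for the lower bound, the partition of $\calF$ according to whether $u$ is a root or which neighbour is its parent, together with the edge-deletion injection $\mathcal{P}_v\rightarrow\calF_{uu}$, correctly yields $|\calF|\le(d_u+1)|\calF_{uu}|$; for the upper bound, your tagged injection from pairs $(F,v)$, $F\in\calF_{uu}$, $v\sim u$, into $(\calF\setminus\calF_{uu})\times\{1,2\}$ is sound --- in case 1 the pair is recovered because $v$ is forced to be the parent of $u$ in the image, in case 2 because $v$ is forced to be the root of $u$'s tree, and the tag resolves cross-case collisions --- giving $(d_u+2)|\calF_{uu}|\le 2|\calF|$ (and both bounds are tight, e.g.\ on a single edge or at vertex $1$ of $G_0$, where your injection becomes a bijection). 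Your alternative electrical argument is also valid, since $\II+\LL$ is the grounded Laplacian of the graph $G^{+}$ obtained by attaching a universal ground vertex, so $w_{uu}$ is the effective resistance from $u$ to ground, and the edge-disjoint-paths and single-cut bounds give exactly $2/(d_u+2)$ and $1/(d_u+1)$; this is closer in spirit to the algebraic/electrical style of the cited source, whereas your forest-counting argument has the advantage of using only the objects the paper already introduces. Either version would serve as a complete proof; the bijective bookkeeping you flag as ``fiddly'' is indeed the only place requiring care, and as written it checks out.
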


\begin{lemma}\label{wuu-auto}
In a simple graph $G=(V,E)$, let $\WW$ be the forest matrix. Then, $w_{uu}=w_{vv}$ holds for all $u \SE v$~\cite{BaZh21}.
\end{lemma}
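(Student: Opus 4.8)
The plan is to exploit the fact that a graph automorphism acts on the Laplacian as a permutation similarity that fixes it, and then push this through the matrix inverse that defines $\WW$. Suppose $u \SE v$, so there is an automorphism $f\colon V\to V$ of $G$ with $f(u)=v$, and let $P$ be the associated $n\times n$ permutation matrix. Since $f$ preserves adjacency it preserves the degree sequence as well, so for all $i,j$ the $(i,j)$ entry of $\LL$ equals the $(f(i),f(j))$ entry of $\LL$; in matrix form this is exactly $P\LL P^{\top}=\LL$. Because $\II$ commutes with every permutation matrix, this gives $P(\II+\LL)P^{\top}=\II+\LL$.

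Next I would invert. The matrix $\II+\LL$ is invertible (indeed $|\II+\LL|=|\calF|\neq 0$), and $P^{-1}=P^{\top}$, so taking inverses of $P(\II+\LL)P^{\top}=\II+\LL$ yields $P\WW P^{\top}=\WW$, where $\WW=(\II+\LL)^{-1}$. Reading off the $(u,u)$ entry of both sides, and noting that $P\WW P^{\top}$ has $w_{f(u)f(u)}$ in position $(u,u)$, we get $w_{uu}=w_{f(u)f(u)}=w_{vv}$, which is the claim.

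As an alternative route one can argue purely combinatorially from the identity $w_{ii}=|\calF_{ii}|/|\calF|$ recalled in the excerpt. An automorphism $f$ sends any spanning rooted forest of $G$ to a spanning rooted forest of $G$ (it preserves acyclicity and the connected-component structure, and one declares $f(\text{root})$ to be the new root), so it induces a bijection $\calF\to\calF$; moreover it restricts to a bijection $\calF_{uu}\to\calF_{f(u)f(u)}=\calF_{vv}$, since a forest in which $u$ is a root is carried to one in which $v$ is a root. Hence $|\calF_{uu}|=|\calF_{vv}|$, and dividing both sides by $|\calF|$ gives $w_{uu}=w_{vv}$.

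I do not anticipate a genuine obstacle here. The only step that deserves a moment of care is justifying that an automorphism commutes with $\LL$ — equivalently, that it maps spanning rooted forests to spanning rooted forests bijectively — but this is immediate from the definitions of automorphism, of $\LL=\DD-\AAA$, and of $\calF$. Everything after that is a one-line consequence, so the write-up should be short; I would include the matrix-theoretic argument as the main proof and mention the combinatorial bijection as a remark.
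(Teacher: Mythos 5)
Your proposal is correct and matches the paper's own (commented-out) argument: both represent the automorphism by a permutation matrix, use $\FF\LL\FF^{\top}=\LL$ together with orthogonality of the permutation matrix, and push this invariance through the inverse $(\II+\LL)^{-1}$ to equate the diagonal entries $w_{uu}$ and $w_{vv}$. The combinatorial bijection you mention as a remark is a nice extra but not needed.
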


\iffalse
\begin{proof}
The automorphic equivalence between nodes $u$ and $v$ implies the existence of an automorphism $f$ of $G$ satisfying $f(u)=v$. For the automorphism $f$, there is a corresponding matrix $\FF$, the entries of which are defined as follows: $(i, f(i))=1$ for each $i \in V$ and all other entries are $0$. It is easy to verify that $\FF\LL\FF^{\top} = \FF^{\top}\LL\FF = \LL$, and $\FF\FF^{\top} = \FF^{\top}\FF = \II$. Let $\ee_u$ be the $u$th canonical basis of the space $\mathbb{R}^{|V|}$, and we have $\ee^{\top}_u \FF = \ee^{\top}_v$ for $f(u)=v$. Hence, 
	\begin{align*}
	w_{vv} &= \ee^{\top}_v (\II + \LL)^{-1} \ee_v
= \ee^{\top}_u \FF (\II + \LL)^{-1} \FF^{\top} \ee_u \\
	&= \ee^{\top}_u \left(\FF\FF^{\top} + \FF\LL\FF^{\top}\right)  \ee_u\\
&= \ee^{\top}_u (\II + \LL)^{-1}  \ee_u = w_{uu},
	\end{align*}
which completes the proof.
\end{proof}
\fi

\begin{figure}
	\centering
	\includegraphics[width=\linewidth]{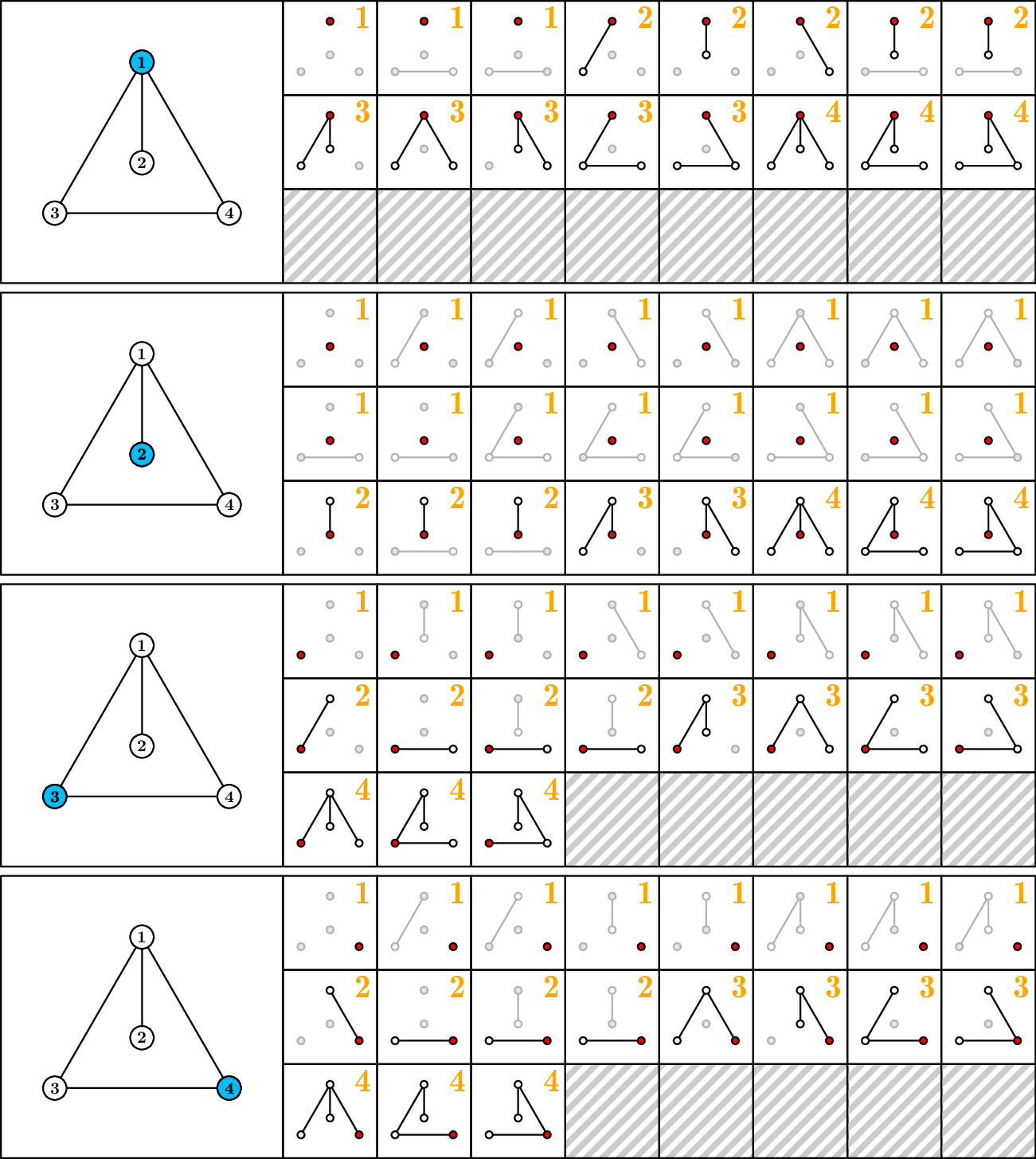}
	\caption{Each tree rooted at $u$ in the spanning rooted forest $F \in \calF_{uu}$ for $u= 1,2,3,4$ in the toy graph $G_0$. The size of each tree rooted at $u$, denoted by $|F|_u$, is labeled in the top right-hand corner of each picture.}\label{F2}
\end{figure}

\section{Role Similarity Metric Based on Spanning Rooted Forest}~\label{S-ForestSim}
In this section, we first define a new role similarity measure, namely \textsf{ForestSim}, based on spanning rooted forests. We then show that the \textsf{ForestSim} score can be expressed in terms of the diagonal elements in the forest matrix and prove that \textsf{ForestSim} is an admissible role similarity metric. After that, we propose \textsf{ForestSimSearch} to process the top-k queries. Finally, we speed up the computation by using a fast approximate algorithm to calculate the diagonal elements of the forest matrix.

\subsection{Definition}
The key point of analyzing structural roles is figuring out \textbf{how} a vertex connects with its context nodes~\cite{XuGe19}. To some extent, the sizes of those trees rooted at $u$ in the spanning rooted forests of a graph reflects the connection mode between the node $u$ and its context vertices. Here, we use the average size of all trees rooted at $u$ to capture the structural information of the vertex $u$.

\begin{definition}\label{defsu}
In a simple graph $G=(V,E)$, the average size of the trees rooted at $u$ in all spanning rooted forests of $G$, denoted by $s(u)$, is defined as
\begin{align}\label{eqsu}
	s(u) := \frac{1}{|\calF_{uu}|} \displaystyle\sum_{F \in \calF_{uu}} |F|_u,
\end{align}
where $|F|_u$ is the size of the tree rooted at $u$ in the forest $F$.
\end{definition}

In Figure~\ref{F2}, we present each tree rooted at $u$ and its size for $u=1,2,3$ and $4$, respectively. Thus, each $s(u)$ for $u \in \{1,2,3,4\}$ is 
\begin{align*}
	s(1) &= \frac{1\times 3 + 2 \times 5 + 3 \times 5 + 4\times 3}{3+5+5+3} = 2.5 \\
	s(2) &= \frac{1 \times 16 + 2 \times 3 + 3 \times 2 + 4 \times 3}{16+3+2+3} \approx 1.67 \\
	s(3) &= \frac{1 \times 8 + 2 \times 4 + 3 \times 4 + 4 \times 3}{8+4+4+3} \approx 2.11 \\
	s(4) &= \frac{1 \times 8 + 2 \times 4 + 3 \times 4 + 4 \times 3}{8+4+4+3} \approx 2.11 
\end{align*}
Note that two nodes $3$ and $4$ are symmetry in $G_0$, and there is $s(3) = s(4)$. Thus, for two similar nodes $u$ and $v$, $s(u)$ and $s(v)$ should be  close. Comes from this motivation, we define \textsf{ForestSim} as follows.

\begin{definition}
	Given a simple graph $G=(V,E)$ and two nodes $u,v \in V$, the \textsf{ForestSim} score between $u$ and $v$ is defined as 
\begin{align*}
	\textsf{ForestSim}(u, v) := \frac{\min(s(u), s(v))}{\max(s(u),s(v))},
\end{align*}
where $s(u)$ is defined by Definition~\ref{defsu}.
\end{definition}

Although the definition of \textsf{ForestSim} is succinct, it is impractical to calculate $s(u)$ by enumerating all spanning rooted forests in $\calF_{uu}$. We next propose an efficient way to calculate \textsf{ForestSim} by expressing $s(u)$ in terms of the elements in the forest matrix $\WW$.

\subsection{Computation}
We first evaluate the sum term $\sum_{F \in \calF_{uu}} |F|_u$ in~\eqref{eqsu}.

\begin{lemma}\label{lma-ssu}
In an undirected unweighted graph $G=(V,E)$, let $\calS_u$ be the set defined by 
\begin{align*}
	\calS_u = \{ (F, v^*) \; | \; F \in \calF_{uu}, v^* \in T(F, u) \},
\end{align*}
where $T(F, u)$ is the tree rooted at $u$ in $F$. Then, the following relation holds:
\begin{align*}
	\displaystyle\sum_{F \in \calF_{uu}} |F|_u = |\calS_u|.
\end{align*}
\end{lemma}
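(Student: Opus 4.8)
The plan is to prove the identity by a straightforward double-counting argument, organizing the set $\calS_u$ according to its first coordinate. The point is that summing a quantity (here $|F|_u$) over a family of objects is the same thing as counting the pairs consisting of an object together with a ``witness'' for that quantity; here the witnesses for $|F|_u$ are precisely the vertices of the tree $T(F,u)$.

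First I would observe that $\calS_u$ decomposes as a disjoint union over forests $F \in \calF_{uu}$ of the fibers
\[
\calS_u^{(F)} := \{\, (F, v^*) \mid v^* \in T(F,u) \,\},
\]
because every element of $\calS_u$ is a pair whose first coordinate is a uniquely determined forest in $\calF_{uu}$. By additivity of cardinality over a disjoint union, $|\calS_u| = \sum_{F \in \calF_{uu}} |\calS_u^{(F)}|$.

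Next I would evaluate a single fiber. For a fixed $F \in \calF_{uu}$, the assignment $v^* \mapsto (F, v^*)$ is a bijection from the vertex set of the tree $T(F,u)$ onto $\calS_u^{(F)}$, so $|\calS_u^{(F)}|$ equals the number of vertices of $T(F,u)$. By Definition~\ref{defsu}, that number is exactly $|F|_u$, the size of the tree rooted at $u$ in $F$. Substituting into the sum from the previous step yields $|\calS_u| = \sum_{F \in \calF_{uu}} |F|_u$, which is the claim.

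There is essentially no obstacle here; the only thing worth being explicit about is that indexing $\calS_u$ by its first coordinate genuinely produces a partition, so that no pair is omitted and none is counted twice. This is immediate since the first coordinate is determined by the pair, and it is exactly what makes the interchange of ``sum of sizes'' and ``count of pairs'' legitimate.
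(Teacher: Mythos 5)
Your double-counting argument is correct: partitioning $\calS_u$ by its first coordinate and noting that the fiber over $F$ has exactly $|F|_u$ elements gives the identity immediately. The paper in fact states this lemma without proof, treating it as evident, and your write-up simply makes explicit the same obvious counting that the authors leave implicit.
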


We next prove that the cardinality of the set $|\calS_u|$ is equal to the size of the set $\calF$.

\begin{figure}[h]
	\centering
	\includegraphics[width=0.8\linewidth]{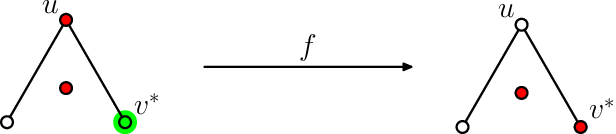}
	\caption{Construction of the mapping $f: \calS_u \rightarrow \calF$}\label{F3}
\end{figure}

\begin{lemma}\label{lma-su}
	In an undirected unweighted graph $G=(V,E)$, $|\calS_u|=|\calF|$ holds for any $u \in V$.
\end{lemma}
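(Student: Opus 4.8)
The plan is to prove $|\calS_u| = |\calF|$ by exhibiting an explicit bijection $f : \calS_u \to \calF$, which is exactly the ``re-rooting'' construction hinted at in Figure~\ref{F3}. Given a pair $(F, v^*) \in \calS_u$, the forest $F \in \calF_{uu}$ has $u$ as the root of the tree $T(F,u)$, and $v^*$ is some vertex of that tree. I would define $f(F, v^*)$ to be the spanning rooted forest obtained from $F$ by leaving every edge and every other tree's root untouched, but changing the root of $T(F,u)$ from $u$ to $v^*$. Since designating any vertex of a tree as its root yields a valid rooted tree, $f(F, v^*)$ is again a spanning rooted forest of $G$, so $f$ is well defined as a map into $\calF$.

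Next I would construct the candidate inverse $g : \calF \to \calS_u$. Given $F' \in \calF$, let $T'$ be the (unique) tree of $F'$ containing $u$, and let $r$ be its root. Define $g(F') = (F, r)$, where $F$ is obtained from $F'$ by moving the root of $T'$ from $r$ to $u$, keeping all other edges and roots fixed. Then $F \in \calF_{uu}$, and because re-rooting does not alter the vertex set of $T'$, the vertex $r$ lies in $T(F,u)$, so $(F, r) \in \calS_u$ and $g$ is well defined.

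Finally I would verify $g \circ f = \mathrm{id}_{\calS_u}$ and $f \circ g = \mathrm{id}_{\calF}$. Applying $f$ then $g$ to $(F, v^*)$ returns $F$: in $f(F, v^*)$ the tree containing $u$ is exactly $T(F,u)$ but rooted at $v^*$, so $g$ records the root $v^*$ and re-roots that tree back at $u$, recovering $F$ and the pair $(F, v^*)$. Applying $g$ then $f$ to $F'$ returns $F'$: after $g$ the recorded root is $r$ and the tree containing $u$ is rooted at $u$, and $f$ then moves that root from $u$ back to $r$, restoring $F'$. Hence $f$ is a bijection and $|\calS_u| = |\calF|$.

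There is no deep obstacle here; the only point requiring care is to state the re-rooting operation precisely and to observe that it changes neither the edge set of the forest nor the roots of the remaining trees, so that the underlying unrooted spanning forest (hence membership in $\calF$) and the vertex set of the tree containing $u$ are both preserved. Everything else is bookkeeping, and combined with Lemma~\ref{lma-ssu} this will give $\sum_{F \in \calF_{uu}} |F|_u = |\calF|$.
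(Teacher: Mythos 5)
Your proposal is correct and uses the same re-rooting bijection $f:\calS_u \rightarrow \calF$ as the paper; the only cosmetic difference is that you verify bijectivity by exhibiting the explicit inverse $g$ and checking both compositions, whereas the paper argues injectivity and surjectivity of $f$ directly. Either verification is fine, so your argument matches the paper's in substance.
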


\begin{proof}
We prove this lemma by constructing a mapping $f:\calS_u \rightarrow \calF$ as follows. Let $(F_0, v_0^*)$ be an arbitrary element in $\calS_u$. The mapping $f$ can be considered as the following operation on the forest $F_0$. We let the node $v_0^*$ be the root of the tree containing $u$, which results in a spanning rooted forest $f(F_0, v_0^*) \in \calF$. Figure~\ref{F3} illustrates the construction of the mapping $f$.

We now prove that the mapping $f:\calS_u \rightarrow \calF$ is a one-to-one mapping. We first prove that $f$ is injective. For any two elements $(F_0, v_0^*) \neq (F_1, v_1^*)$, we distinguish two cases: $F_0 \neq F_1$ and $F_0 = F_1$. For the first case, $F_0 \neq F_1$, $f(F_0, v_0^*)$ and $f(F_1, v_1^*)$ have either different set of edges or different set of root nodes. For the second case, $F_0 = F_1$, we have $v_0^* \neq v_1^*$, with $v_0^*$ being a root in $f(F_0, v_0^*)$ but not a root in $f(F_1, v_1^*)$. Thus, for both cases, $f(F_0, v_0^*) \neq f(F_1, v_1^*)$, implying that $f$ is an injective mapping.

We proceed to prove that $f$ is surjective. In other words, for any $F' \in \calF$, there exists $F_0$ and $v_0^*$ satisfying $f(F_0, v_0^*) = F'$. For an arbitrary $F' \in \calF$, let $v_0^*$ be the root of the tree including the node $u$. We let $u$ be the root instead of $v_0^*$, which leads to a new forest $F_0$. Obviously, $f(F_0, v_0^*)=F'$, which indicates that $f$ is surjective. 

Therefore, there exists a one-to-one mapping $f:\calS_u \rightarrow \calF$, which indicates that $|\calS_u|=|\calF|$.
\end{proof}

Lemma~\ref{lma-su} leads to an efficient way to calculate \textsf{ForestSim}.

\begin{theorem}\label{thm-fs}
Let $G=(V,E)$ be an undirected unweighted graph with its forest matrix being $\WW$. Then, for any two vertices $u,v \in V$, the \textsf{ForestSim} score between $u$ and $v$ is 
\begin{align*}
	\textsf{\rm ForestSim}(u, v) = \frac{\min(w_{uu}, w_{vv})}{\max(w_{uu}, w_{vv})}.
\end{align*}
\end{theorem}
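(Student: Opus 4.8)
The plan is to express $s(u)$ entirely in terms of $w_{uu}$ and then observe that the ratio defining \textsf{ForestSim} collapses nicely. Starting from Definition~\ref{defsu}, the numerator $\sum_{F \in \calF_{uu}} |F|_u$ is rewritten using Lemma~\ref{lma-ssu} as $|\calS_u|$, and then Lemma~\ref{lma-su} gives $|\calS_u| = |\calF|$. The denominator $|\calF_{uu}|$ is exactly the number of spanning rooted forests in which $u$ is a root. Combining these,
\begin{align*}
	s(u) = \frac{|\calF|}{|\calF_{uu}|}.
\end{align*}
Since $w_{uu} = |\calF_{uu}|/|\calF|$ by the definition of the forest matrix recalled in the preliminaries, this is precisely $s(u) = 1/w_{uu}$.

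With $s(u) = 1/w_{uu}$ and $s(v) = 1/w_{vv}$ in hand, I would substitute into the definition of \textsf{ForestSim}. Because $w_{uu}, w_{vv} > 0$ (indeed $w_{uu} \ge 1/(d_u+1) > 0$ by Lemma~\ref{wuu-range}), the map $x \mapsto 1/x$ is a strictly decreasing positive bijection on the relevant domain, so $\min(s(u), s(v)) = 1/\max(w_{uu}, w_{vv})$ and $\max(s(u), s(v)) = 1/\min(w_{uu}, w_{vv})$. Therefore
\begin{align*}
	\textsf{ForestSim}(u,v) = \frac{\min(s(u), s(v))}{\max(s(u), s(v))} = \frac{1/\max(w_{uu}, w_{vv})}{1/\min(w_{uu}, w_{vv})} = \frac{\min(w_{uu}, w_{vv})}{\max(w_{uu}, w_{vv})},
\end{align*}
which is the claimed identity.

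The only genuine content here is the chain $s(u) = |\calF|/|\calF_{uu}| = 1/w_{uu}$, and that rests on Lemmas~\ref{lma-ssu} and~\ref{lma-su}, which are already established (or will be, just above). So the main obstacle is not in this theorem itself but in the bijective argument of Lemma~\ref{lma-su} — the re-rooting map $f \colon \calS_u \to \calF$ that sends a (forest, vertex-in-$u$'s-tree) pair to the forest obtained by moving the root of $u$'s tree to that vertex. Once that is granted, the proof of Theorem~\ref{thm-fs} is a short substitution, and I would keep it to a few lines: invoke the two lemmas, deduce $s(u) = 1/w_{uu}$, and flip the fraction. One minor point worth stating explicitly is the positivity of $w_{uu}$, so that the reciprocals are well-defined and the order-reversal is valid; Lemma~\ref{wuu-range} supplies this immediately.
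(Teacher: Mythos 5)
Your argument is correct and follows the paper's own proof essentially verbatim: invoke Lemmas~\ref{lma-ssu} and~\ref{lma-su} to get $s(u) = |\calF|/|\calF_{uu}| = 1/w_{uu}$, then substitute into the definition of \textsf{ForestSim} and simplify the ratio of reciprocals. Your explicit appeal to Lemma~\ref{wuu-range} for positivity is a small extra touch the paper leaves implicit, but the approach is the same.
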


\begin{proof}
	By Lemma~\ref{lma-ssu} and~\ref{lma-su}, we have
\begin{align*}
	s(u) = \frac{1}{|\calF_{uu}|} \displaystyle\sum_{F \in \calF_{uu}} |F|_u = \frac{|\calS_u|}{|\calF_{uu}|} = \frac{|\calF|}{|\calF_{uu}|} = \frac{1}{w_{uu}}.
\end{align*}
Therefore,
\begin{align*}
	&\textsf{ForestSim}(u, v) = \frac{\min(s(u), s(v))}{\max(s(u),s(v))} \\
	=& \frac{\min(1/w_{uu}, 1/w_{vv})}{\max(1/w_{uu}, 1/w_{vv})} 
	= \frac{\min(w_{uu}, w_{vv})}{\max(w_{uu}, w_{vv})},
\end{align*}
which completes the proof.
\end{proof}

Theorem~\ref{thm-fs} shows that \textsf{ForestSim} score between two nodes can be represented in terms of diagonal elements of the forest matrix $\WW$. Thus, the \textsf{ForestSim} score for all pairs of nodes in a graph with $n$ vertices can be exactly calculated in $O(n^3)$.

\subsection{Admissibility}
We next prove the admissibility of \textsf{ForestSim}.

\begin{theorem}
\textsf{ForestSim} is an admissible role similarity metric.
\end{theorem}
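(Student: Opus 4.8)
The plan is to verify each of the five axiomatic properties from Definition~\ref{def-axiom} in turn, using the closed form $\textsf{ForestSim}(u,v) = \min(w_{uu}, w_{vv})/\max(w_{uu}, w_{vv})$ established in Theorem~\ref{thm-fs} together with Lemmas~\ref{wuu-range} and~\ref{wuu-auto}. The key observation that drives almost everything is that $\textsf{ForestSim}$ is a function only of the ratio of two strictly positive numbers: by Lemma~\ref{wuu-range} we have $w_{uu} \ge \frac{1}{d_u+1} > 0$ for every $u$, so $s(u) = 1/w_{uu}$ is a well-defined positive real, and $\textsf{ForestSim}(u,v) = \min(s(u),s(v))/\max(s(u),s(v))$.

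First I would dispatch P1, P2, P3, P4, which are all immediate. For P1 (range), since both $w_{uu}, w_{vv} > 0$, the ratio $\min/\max$ lies in $(0,1]$, hence in $[0,1]$. For P2 (symmetry), the expression is visibly symmetric in $u$ and $v$ because $\min$ and $\max$ are. For P3 (automorphism confirmation) and P4 (transitive similarity), the engine is Lemma~\ref{wuu-auto}: if $u \SE v$ then $w_{uu} = w_{vv}$, so $\textsf{ForestSim}(u,v) = w_{uu}/w_{uu} = 1$, giving P3; and if $u \SE u'$ then $w_{uu} = w_{u'u'}$, whence $\textsf{ForestSim}(u,v) = \min(w_{uu},w_{vv})/\max(w_{uu},w_{vv}) = \min(w_{u'u'},w_{vv})/\max(w_{u'u'},w_{vv}) = \textsf{ForestSim}(u',v)$, giving P4. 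These four establish that $\textsf{ForestSim}$ is an admissible role similarity measure.

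The substantive part is P5, the triangle inequality for $\textsf{Dist}(u,v) = 1 - \textsf{ForestSim}(u,v)$. Writing $a = s(u)$, $b = s(u')$, $c = s(v)$ as positive reals, I need $1 - \frac{\min(a,c)}{\max(a,c)} \le \left(1 - \frac{\min(a,b)}{\max(a,b)}\right) + \left(1 - \frac{\min(b,c)}{\max(b,c)}\right)$. The clean way is to introduce $g(x,y) := 1 - \min(x,y)/\max(x,y) = \frac{\max(x,y) - \min(x,y)}{\max(x,y)} = \frac{|x-y|}{\max(x,y)}$, and then observe that $g(x,y) = 1 - e^{-|\ln x - \ln y|}$; that is, $\textsf{Dist}$ is the pushforward under the map $t \mapsto 1 - e^{-t}$ of the metric $\rho(u,v) = |\ln s(u) - \ln s(v)|$ on $V$. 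Since $t \mapsto 1 - e^{-t}$ is concave, increasing, and vanishes at $0$ on $[0,\infty)$, it is subadditive there ($\phi(s+t) \le \phi(s) + \phi(t)$), and composing a subadditive nondecreasing function with a genuine metric yields a metric; in particular the triangle inequality $\rho(u,v) \le \rho(u,u') + \rho(u',v)$ is transported to $\textsf{Dist}$. I would present this either via the $\phi\circ\rho$ argument or, if a more self-contained writeup is wanted, by a direct case analysis on the orderings of $a, b, c$ (six cases, several symmetric), reducing each to an elementary inequality among positive reals.

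The main obstacle is purely the bookkeeping in P5: making the reduction to the metric $|\ln s(\cdot) - \ln s(\cdot)|$ clean, and justifying the subadditivity of $t \mapsto 1 - e^{-t}$ crisply (concavity plus $\phi(0)=0$ gives $\phi(t) \ge \frac{t}{s+t}\phi(s+t)$ and $\phi(s) \ge \frac{s}{s+t}\phi(s+t)$, which sum to $\phi(s)+\phi(t) \ge \phi(s+t)$). Everything else is a one-line consequence of Theorem~\ref{thm-fs} and Lemmas~\ref{wuu-range}--\ref{wuu-auto}. One small point to state explicitly at the outset: $w_{uu} > 0$ always holds, so no division-by-zero issue arises and $s(u)$ is always finite and positive.
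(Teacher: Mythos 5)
Your proof is correct, and on the one substantive point---the triangle inequality P5---it takes a genuinely different route from the paper. The paper does not prove P5 from scratch: after rewriting $\textsf{Dist}(u,v)=|w_{uu}-w_{vv}|/\max(w_{uu},w_{vv})$ and noting positivity of the diagonal entries via Lemma~\ref{wuu-range}, it simply cites \cite{TiPa09} for the fact that the relative distance $|x-y|/\max(x,y)$ satisfies the triangle inequality on positive reals; it also dispenses with P4 by invoking the known implication that P5 implies P4 (from \cite{JiLe11}) rather than checking it directly. You instead give a self-contained argument: $\textsf{Dist}(u,v)=1-e^{-|\ln s(u)-\ln s(v)|}$, i.e.\ the composition of the pseudometric $\rho(u,v)=|\ln s(u)-\ln s(v)|$ with the concave, increasing, subadditive map $t\mapsto 1-e^{-t}$ vanishing at $0$, which transports the triangle inequality; your subadditivity argument via concavity and $\phi(0)=0$ is sound, and your direct verification of P4 from Lemma~\ref{wuu-auto} is equally valid. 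What your approach buys is independence from the external reference and a structural explanation of \emph{why} the relative distance is a metric (it is a concave subadditive transform of a log-space metric); what the paper's approach buys is brevity. One cosmetic remark: since admissibility as a \emph{metric} only requires P1--P5 with $\textsf{Sim}\in[0,1]$, your observation that the ratio actually lies in $(0,1]$ is a harmless strengthening, and your explicit remark that $w_{uu}>0$ (so $s(u)$ is finite and positive) is exactly the role Lemma~\ref{wuu-range} plays in the paper's proof as well.
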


\begin{proof}
Trivially, \textsf{ForestSim} obeys Range (P1) and Symmetry (P2). It has been proved that Transitive similarity (P4) is implied by Triangle inequality (P5)~\cite{JiLe11}. Thus, we only need to prove that \textsf{ForestSim} satisfies Automorphism conformation (P3) and Triangle inequality (P5).

Since $u \SE v$, there is $w_{uu} = w_{vv}$ based on Lemma~\ref{wuu-auto}. Hence, 
\begin{align*}
	\textsf{ForestSim}(u, v) = \frac{\min(w_{uu},w_{vv})}{\max(w_{uu},w_{vv})} = \frac{w_{uu}}{w_{uu}} = 1.
\end{align*}
Therefore, \textsf{ForestSim} obeys Automorphism conformation.

According to Definition~\ref{def-axiom}, there is
\begin{align}
	& \textsf{Dist}(u, v) = 1 - \textsf{ForestSim}(u, v) \notag\\
	=& 1 - \frac{\min(w_{uu}, w_{vv})}{\max(w_{uu}, w_{vv})} = \frac{|w_{uu} - w_{vv}|}{\max(w_{uu}, w_{vv})}. \label{eq:dist}
\end{align}
Since for any three nodes $u, u'$ and $v$ in a graph, there is $w_{uu},w_{u'u'},w_{vv} > 0$ based on Lemma~\ref{wuu-range}. Thus, according to~\cite{TiPa09}, we have
\begin{align*}
	\frac{|w_{uu}-w_{vv}|}{\max(w_{uu},w_{vv})} \leq \frac{|w_{uu}-w_{u'u'}|}{\max(w_{uu}, w_{u'u'})} + \frac{|w_{u'u'}-w_{vv}|}{\max(w_{u'u'}, w_{vv})}.
\end{align*} 
That is,
\begin{align*}
	\textsf{Dist}(u,v) \leq \textsf{Dist}(u,u') + \textsf{Dist}(u',v),
\end{align*}
which indicates that \textsf{ForestSim} satisfies Triangle inequality.

Consequently, \textsf{ForestSim} satisfies all five properties listed in Definition~\ref{def-axiom}. In other words, \textsf{ForestSim} is an admissible role similarity metric.
\end{proof}

\subsection{Top-k Similarity Search Algorithm}
In this subsection, we propose a fast top-k similarity search algorithm, namely \textsf{ForestSimSearch}, to handle the top-k query. Like \textsf{StructSim}, \textsf{ForestSimSearch} also needs precomputation. 

In the precomputation phase, we first compute the forest matrix and store all its diagonal elements in the list $\wvec$. Then, we sort the vertices in the ascending order of their corresponding diagonal elements in the forest matrix and store the result and the rank in $\avec$ and $\rankvec$ respectively.

Therefore, the following lemma holds.
\begin{lemma}\label{lem:precompute}
In a simple graph $G=(V,E)$, let $\WW=(w_{ij})_{n \times n}$ be its forest matrix. Three lists $\wvec$, $\avec$, and $\rankvec$ are defined as follows.
\begin{align}
	\wvec &:= [w_{11}, w_{22}, \dots, w_{nn}] \label{def:wvec} \\
	\avec &:= [a_1,a_2,\dots,a_n] \;\;\;\; w_{a_i a_i} \leq w_{a_{i+1} a_{i+1}} \label{def:avec} \\
	\rankvec &:= [rank_1, rank_2, \dots, rank_n] \;\;\;\; a_{rank_i} = i \label{def:rankvec}
\end{align}
Then, for any $u_1,u_2,u_3 \in V$ satisfying $rank_{u_1} < rank_{u_2} < rank_{u_3}$, the following statements hold.
\begin{align*}
	\textsf{\rm ForestSim}(u_1, u_2) & \geq \textsf{\rm ForestSim}(u_1, u_3) \\
	\textsf{\rm ForestSim}(u_2, u_3) & \geq \textsf{\rm ForestSim}(u_1, u_3)
\end{align*}
\end{lemma}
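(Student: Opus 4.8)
The plan is to reduce everything to a statement about the sorted sequence of diagonal forest-matrix entries. By Theorem~\ref{thm-fs}, $\textsf{ForestSim}(x,y)$ depends only on the pair $\{w_{xx}, w_{yy}\}$ through the ratio $\min/\max$, which is a nonincreasing function of the larger entry when the smaller is fixed and a nondecreasing function of the smaller entry when the larger is fixed. The definitions~\eqref{def:avec} and~\eqref{def:rankvec} tell us that $rank_{u_1} < rank_{u_2} < rank_{u_3}$ is equivalent to $w_{u_1 u_1} \leq w_{u_2 u_2} \leq w_{u_3 u_3}$ (the list $\avec$ sorts vertices in ascending order of their diagonal entries, and $\rankvec$ is its inverse permutation). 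So the first step is simply to translate the rank hypothesis into the inequality chain $w_{u_1 u_1} \leq w_{u_2 u_2} \leq w_{u_3 u_3}$, and then the whole lemma becomes a monotonicity fact about the function $g(a,b) = \min(a,b)/\max(a,b)$ on positive reals.

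For the first inequality, $\textsf{ForestSim}(u_1,u_2) \geq \textsf{ForestSim}(u_1,u_3)$: since $w_{u_1 u_1} \leq w_{u_2 u_2}$ and $w_{u_1 u_1} \leq w_{u_3 u_3}$, both ratios have numerator $w_{u_1 u_1}$ (the minimum in each pair), so they equal $w_{u_1 u_1}/w_{u_2 u_2}$ and $w_{u_1 u_1}/w_{u_3 u_3}$ respectively; because $w_{u_2 u_2} \leq w_{u_3 u_3}$ and all entries are strictly positive by Lemma~\ref{wuu-range}, the first quotient is at least the second. For the second inequality, $\textsf{ForestSim}(u_2,u_3) \geq \textsf{ForestSim}(u_1,u_3)$: now both pairs have maximum $w_{u_3 u_3}$, so the ratios are $w_{u_2 u_2}/w_{u_3 u_3}$ and $w_{u_1 u_1}/w_{u_3 u_3}$; since $w_{u_1 u_1} \leq w_{u_2 u_2}$, the former dominates. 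One should handle the possibility of equalities among the $w_{uu}$ (ties broken arbitrarily when building $\avec$) — but since $g$ is defined with $\min$ and $\max$ rather than an ordered pair, equalities cause no trouble; the inequalities stay weak ($\geq$) as stated.

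I do not anticipate a genuine obstacle here: once Theorem~\ref{thm-fs} is in hand and Lemma~\ref{wuu-range} guarantees positivity (so no division-by-zero and no sign issues), this is an elementary monotonicity argument. The only mild care needed is the bookkeeping that relates $\rankvec$ back to the order on diagonal entries — i.e., verifying from~\eqref{def:avec}–\eqref{def:rankvec} that $rank_u < rank_v \iff w_{uu} \leq w_{vv}$ (with ties allowed). After that, each of the two claimed inequalities is a one-line comparison of two fractions with a common numerator or a common denominator.
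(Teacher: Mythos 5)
Your proposal is correct and matches the paper's argument: both translate $rank_{u_1}<rank_{u_2}<rank_{u_3}$ into $w_{u_1u_1}\leq w_{u_2u_2}\leq w_{u_3u_3}$ via~\eqref{def:avec}--\eqref{def:rankvec} and then compare the ratios of diagonal entries (the paper phrases this through $\textsf{Dist}=1-\textsf{ForestSim}$, while you compare the $\min/\max$ ratios directly, which is the same one-line inequality). Your extra remarks on positivity via Lemma~\ref{wuu-range} and on ties are fine and do not change the substance.
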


\begin{proof}
According to~\eqref{def:avec} and~\eqref{def:rankvec}, we have 
\begin{align*}
	w_{u_1 u_1} \leq w_{u_2 u_2} \leq w_{u_3 u_3}.
\end{align*}	
Thus, based on~\eqref{eq:dist}, there is
\begin{align*}
	& \textsf{Dist}(u_1, u_2) = \frac{w_{u_2 u_2} - w_{u_1 u_1}}{w_{u_2 u_2}} = 1 - \frac{w_{u_1 u_1}}{w_{u_2 u_2}} \\
	\leq & 1 - \frac{w_{u_1 u_1}}{w_{u_3 u_3}}  
	 = \frac{w_{u_3 u_3} - w_{u_1 u_1}}{w_{u_3 u_3}} = \textsf{Dist}(u_1, u_3). \\
\end{align*}
Due to the fact that $\textsf{Dist}(u,v) = 1 - \textsf{ForestSim}(u,v)$, we have 
\begin{align*}
	\textsf{ForestSim}(u_1, u_2) \geq \textsf{ForestSim}(u_1, u_3).
\end{align*}
Similarly, we can prove that 
\begin{align*}
	\textsf{ForestSim}(u_2, u_3) \geq \textsf{ForestSim}(u_1, u_3),
\end{align*}
which completes the proof.
\end{proof}

According to Lemma~\ref{lem:precompute}, for any node $u$ in the graph, we have
\begin{align*}
	&\textsf{ForestSim}(a_1, u) \leq \textsf{ForestSim}(a_2, u)  \leq \cdots \\
	 \cdots \leq &\textsf{ForestSim}(a_{rank_u-2}, u) \leq \textsf{ForestSim}(a_{rank_u-1}, u) \\
\end{align*}
and
\begin{align*}
	&\textsf{ForestSim}(a_n, u) \leq \textsf{ForestSim}(a_{n-1}, u)  \leq \cdots \\
	 \cdots \leq &\textsf{ForestSim}(a_{rank_u+2}, u) \leq \textsf{ForestSim}(a_{rank_u+1}, u). \\
\end{align*}
Therefore, inspired by merge sort~\cite{CoLe09}, we devise the top-k similarity search algorithm, namely \textsf{ForestSimSearch}, detailed in Algorithm~\ref{alg:ForestSimSearch}. Obviously, \textsf{ForestSimSearch} runs in $O(k)$ time. 

\begin{algorithm}
\caption{$\textsc{ForestSimSearch}(u,k)$}\label{alg:ForestSimSearch}
\begin{algorithmic}[1]
    \Require $G$, $\wvec$, $\avec$, $\rankvec$
    \Procedure{ForestSimSearch}{$u,k$}
	\State $l = rank_u-1$, $r = rank_u+1$, $S_{u,k} = \emptyset$ 
	\For {$i=1$ to $k$} 
		\If {$\textsf{\rm ForestSim}(a_l,u)<\textsf{\rm ForestSim}(a_r,u)$}
			\State $S_{u,k} = S_{u,k} \cup \{a_l\}$ 
			\State $l = l-1$ 
		\ElsIf 
			\State $S_{u,k} = S_{u,k} \cup \{a_r\}$ 
			\State $r = r+1$ 
		\EndIf
	\EndFor
	\State \textbf{return} $S_{u,k}$
    \EndProcedure 
\end{algorithmic}
\end{algorithm}

\subsection{Fast Approximate Algorithm}
The key step to compute \textsf{ForestSim} is calculating the diagonal elements of $\WW$. It takes $O(n^3)$ time and $O(n^2)$ space to get all diagonal elements of $\WW$ by inverting matrix $\II+\LL$, and we call this method \textsf{ForestSim-EX}. High time and space complexity of \textsf{ForestSim-EX} hinders its application to large real-world networks. To reduce the time and space complexity, we use a fast approximate algorithm to calculate the diagonal elements of the forest matrix, and accordingly call the algorithm \textsf{ForestSim-AP}.

Jin et al.~\cite{JiBaZh19} proposed a fast approximate algorithm, based on Johnson–Lindenstrauss lemma~\cite{JoLi84, Ac03} and an efficient linear system solver~\cite{KySa16}, to compute the diagonal entries of the forest matrix. Given a graph $G=(V,E)$ with $n$ vertices and $m$ edges, and an error parameter $\epsilon \in (0,0.5)$, the fast approximate algorithm~\cite{JiBaZh19} returns all $\tilde{w}_{uu}$ for all $u \in V$. For each node $u$, there is 
\begin{align*}
	(1-\epsilon) w_{uu} \leq \tilde{w}_{uu} \leq (1+\epsilon) w_{uu}
\end{align*}
with the probability $1-1/n$. Calculating the approximate values for all diagonal entries of $\WW$ takes $O(\epsilon^{-2}m\log^5{n}\log{\frac{1}{\epsilon}})$ time and $O(m\log^3{n})$ space~\cite{KySa16}. 

Here, we simply set $w_{uu}=\tilde{w}_{uu}$, which already reaches satisfactory results. Once we obtain the diagonal entries of the forest matrix, it only takes $O(n\log{n})$ time and $O(n)$ space to compute $\avec$ and $\rankvec$. Therefore, \textsf{ForestSim-AP} can finish the precomputation within nearly linear time and nearly linear space.

\begin{figure*}
	\centering
	\includegraphics[width=0.9\linewidth]{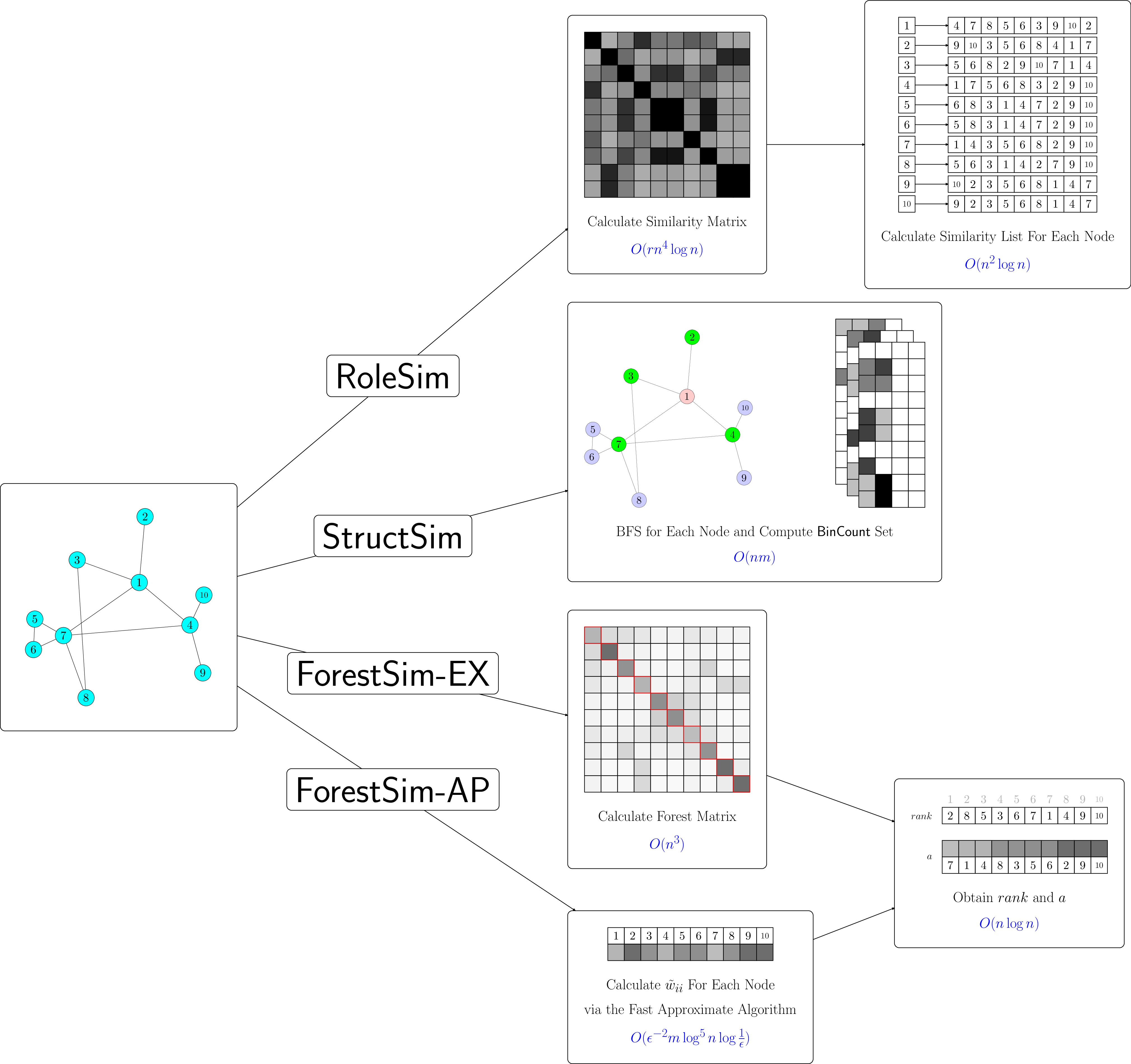}
	\caption{Precomputation of the studied role similarity metrics.}\label{F4}
\end{figure*}

\begin{table*}
	\centering
	\caption{Time and space complexity of the studied role similarity metrics}\label{Tab1}
	\begin{tabular}{lccrcrccr}
\toprule
\multirow{2}*{Metric} &&& \multicolumn{3}{c}{Time Complexity (Worst Case)} &&& \multirow{2}*{Space Complexity} \\
\cmidrule{4-6}
&&& Precomputation && Top-K Similarity Search &&& \\
\midrule
\textsf{RoleSim} &&& $O(rn^4\log{n})$ && $O(k)$ &&& $O(n^2)$ \\
\textsf{StructSim} &&& $O(nm)$ && $O(nL\log{n})$ &&& $O(m + nL\log{n})$ \\
\textsf{ForestSim-EX} &&& $O(n^3)$ && $O(k)$ &&& $O(n^2)$ \\
\textsf{ForestSim-AP} &&& $O(\epsilon^{-2}m\log^5{n}\log{\frac{1}{\epsilon}})$ && $O(k)$ &&& $O(m\log^3{n})$ \\
\bottomrule
	\end{tabular}
\end{table*}

\section{Complexity Analysis}~\label{S-Complex}
In this section, we systematically analyze the time and space complexity of \textsf{RoleSim}~\cite{JiLe11}, \textsf{StructSim}~\cite{ChLa20}, and \textsf{ForestSim} in finding top-k similar nodes for a given node. For each role similarity measure, its time complexity includes two parts: precomputation and top-k similarity search. Precomputation of the studied role similarity metrics is shown in Figure~\ref{F4}, and we list the time and space complexity of each measure in Table~\ref{Tab1}.

We first analyze \textsf{RoleSim}. In the precomputation phase, one first calculates the \textsf{RoleSim} score matrix $\RR$, which needs $O(rn^4\log{n})$ time where $r$ is the number of iterations. Then, for each node $u$, we can compute its similarity list $[v_1,v_2,\dots,v_{n-1}]$ satisfying $\textsf{RoleSim}(u,v_i) \geq \textsf{RoleSim}(u,v_{i+1})$ for $i = 1,2,\dots,n-2$, which requires $O(n^2\log{n})$ time.   For an arbitrary node $u$ in $G$, one only needs to output the first $k$ element in its similarity list to obtain the top-k similar nodes for node $u$, which runs in $O(k)$ time. Therefore, it takes $O(rn^4\log{n})$ time to finish the precomputation, and $O(k)$ time to process a top-k query. For space complexity, it requires $O(n^2)$ space to compute $\RR$ and store similarity lists of all vertices.

For \textsf{StructSim}, we only consider \textsf{StructSim-BC} in~\cite{ChLa20} with $L$ being the number of levels defined by the user. In the precomputation phase, it computes the \textsf{BinCount} set by doing breadth-first search for each node in the graph, which takes $O(nm)$ time in the worst case. To find top-k similar nodes for a given node $u$, one needs to calculate the similarity scores between the node $u$ and all other nodes, which takes $O(nL\log{n})$ time. After calculating these similarity scores, one also needs to sort to obtain the top-k similar nodes, which takes $O(n\log{n})$ time. Hence, it requires $O(nm)$ time to finish the precomputation, and $O(nL\log{n})$ time to handle a top-k query. The space complexity of \textsf{StructSim-BC} is $O(m+nL\log{n})$ since it requires $O(m)$ space to store the adjacent list of the graph, and $O(nL\log{n})$ space to store the \textsf{BinCount} set.

We proceed to analyze \textsf{ForestSim-EX}. In the precomputation phase, we first obtain all diagonal entries of the forest matrix by calculating $\WW$, which takes $O(n^3)$ time. Then, it requires $O(n\log{n})$ time to compute $\avec$ and $\rankvec$. We use \textsf{ForestSimSearch} to find the top-k similar vertices, which takes $O(k)$ time. Thus, it takes $O(n^3)$ time to finish the precomputation, and $O(k)$ time to process a top-k query. The space complexity of \textsf{ForestSim-EX} is $O(n^2)$ since it requires $O(n^2)$ space to store $\WW$ and $O(n)$ space to store $\wvec$, $\avec$, and $\rankvec$.

We finally analyze \textsf{ForestSim-AP}. In the precomputation phase, it uses the fast approximate algorithm~\cite{JiBaZh19} to compute the approximate values of all diagonal elements of the forest matrix, which requires $O(\epsilon^{-2}m\log^5{n}\log{\frac{1}{\epsilon}})$ time~\cite{JoLi84, KySa16}. Computing $\avec$ and $\rankvec$ takes $O(n\log{n})$ time. The top-k similarity search of \textsf{ForestSim-AP} is the same as that of \textsf{ForestSim-EX}, which takes $O(k)$ time. Therefore, it takes $O(\epsilon^{-2}m\log^5{n}\log{\frac{1}{\epsilon}})$ time to finish the precomputation, and $O(k)$ time to handle a top-k query. The space complexity of \textsf{ForestSim-AP} is $O(m\log^3{n})$ for the solver needs $O(m\log^3{n})$ space~\cite{KySa16}, and storing $\wvec$, $\avec$, and $\rankvec$ requires $O(n)$ space.

\textsf{RoleSim} and \textsf{ForestSim-EX} cannot work on large graphs since they both require $O(n^2)$ space, while \textsf{StructSim} and \textsf{ForestSim-AP} only need nearly linear space, which makes them able to run on large graphs. However, \textsf{StructSim} needs $O(nm)$ time to finish the precomputation, while \textsf{ForestSim-AP} only needs nearly linear time. Moreover, it only takes $O(k)$ time for \textsf{ForestSim-AP} to process a top-k query, while \textsf{StructSim} needs $O(nL\log{n})$, which is unacceptable on large networks. In brief, \textsf{ForestSim-AP} is the unique measure that can efficiently handle top-k queries on large real-world networks. 

\begin{table}
	\centering
	\caption{Detailed statistics of real-world networks used for effectiveness evaluation.}\label{exp-data-info1}
	\begin{tabular}{lcrrcr}
		\toprule
Network && $n$ & $m$ && Labels \\
		\midrule
Karate && 34 & 77 && 3 \\
Brazil && 131 & 1,003 && 4 \\
Europe && 399 & 5,993 && 4 \\
Expressway-status && 348 & 675 && 2 \\
Expressway-starbucks && 348 & 675 && 3 \\
Actor && 7,779 & 26,752 && 4 \\
		\bottomrule
	\end{tabular}
\end{table}

\section{Experiment}~\label{S-Exp}
In this section, we compare \textsf{ForestSim} with \textsf{RoleSim} and \textsf{StructSim} in terms of both effectiveness and efficiency. We implement \textsf{ForestSim}, \textsf{RoleSim} and \textsf{StructSim} in Julia (version 1.5.3). Our code is available online~\footnote{https://github.com/Accelerator950113/ForestSim}. For \textsf{ForestSim-AP}, we set the error parameter $\epsilon=0.1$. For \textsf{StructSim}, we set the number of levels $L=3$. All experiments are conducted on a machine configured with a 4-core 4.2GHz Intel i7-7700K CPU and 32GB RAM.

\begin{table*}
	\centering
	\caption{Statistics of large real networks used in experiments and running time (second, s) of four studied measures.}\label{exp-time}
	\begin{tabular}{lrrcrrrr}
		\toprule
\multirow{2}*{Network} & \multirow{2}*{$n$} & \multirow{2}*{$m$} && \multicolumn{4}{c}{Running Time (s)} \\
		\cmidrule{5-8}
&&&& \textsf{RoleSim} & \textsf{StructSim} & \textsf{ForestSim-EX} & \textsf{ForestSim-AP} \\
		\midrule
Diseasome & 516 & 1,188 &  & 28.07 & 0.32 & 0.04 & 2.91\\
EmailUniv & 1,133 & 5,451 &  & 749.40 & 1.48 & 0.17 & 1.25\\
Hamster & 2,426 & 16,630 &  & 7816.60 & 7.91 & 0.91 & 3.84\\
GridWorm & 3,507 & 6,531 &  & 1358.08 & 16.51 & 2.06 & 2.61\\
GrQc & 4,158 & 13,422 &  & 4568.81 & 24.71 & 2.99 & 5.29\\
Erdos992 & 5,094 & 7,515 &  & 1944.24 & 36.16 & 4.63 & 3.17\\
Reality & 6,809 & 7,680 &  & 2380.87 & 63.88 & 10.17 & 2.60\\
Dmela & 7,393 & 25,569 &  & 22392.88 & 82.98 & 16.95 & 10.66\\
HepPh & 11,204 & 117,619 &  & --- & 194.00 & 43.11 & 67.07\\
PagesCompany & 14,113 & 52,126 &  & --- & 298.49 & 82.41 & 50.65\\
AstroPh & 17,903 & 196,972 &  & --- & 523.13 & 162.16 & 135.71\\
Gplus & 23,628 & 39,194 &  & --- & 883.34 & 354.28 & 29.13\\
Brightkite & 56,739 & 212,945 &  & --- & 4468.01 & --- & 259.99\\
BlogCatalog & 88,784 & 2,093,195 &  & --- & 13589.79 & --- & 1318.71\\
Douban & 154,908 & 327,162 &  & --- & 45183.37 & --- & 504.17\\
MathSciNet & 332,689 & 820,644 &  & --- & --- & --- & 1965.01\\
Flickr & 513,969 & 3,190,452 &  & --- & --- & --- & 5008.16\\
IMDB & 896,305 & 3,782,447 &  & --- & --- & --- & 10236.09\\
YoutubeSnap & 1,134,890 & 2,987,624 &  & --- & --- & --- & 7897.06\\
Flixster & 2,523,386 & 7,918,801 &  & --- & --- & --- & 13051.02\\
		\bottomrule
	\end{tabular}
\end{table*}

\subsection{Efficiency}
In this subsection, we evaluate the efficiency of studied metrics on 20 real networks, all of which are publicly available in Koblenz Network Collection~\cite{Ku13} and Network Repository~\cite{RyNe15}. These real-world networks are preprocessed as simple graphs. In each network, we find the top-10 similar nodes for all vertices. Statistics of each network and the experimental results are listed in Table~\ref{exp-time}. 

Due to the high time complexity, \textsf{RoleSim} cannot finish the computation in 24 hours on the graph that contains more than 10,000 vertices. For \textsf{StructSim}, it fails to finish the computation in 24 hours on the networks including more than 300,000 vertices since it needs $O(n^2L\log{n})$ time to find the top-10 similar vertices for all nodes in the graph. For \textsf{ForestSim-EX}, it cannot handle the networks with over 50,000 nodes because it requires $O(n^2)$ space. \textsf{ForestSim-AP} works well on all 20 real-world networks. Moreover, it only takes about four hours for ForestSim-AP to finish the computation on Flixster, a graph with over 2,000,000 nodes.

Note that for large networks, \textsf{RoleSim}, \textsf{StructSim}, and \textsf{ForestSim-EX} cannot finish the computation due to their high time and memory cost, while \textsf{ForestSim-AP} works well on all these real networks, which shows the significant efficiency advantages of \textsf{ForestSim-AP}.

\begin{figure*}
	\centering
	\includegraphics[width=\linewidth]{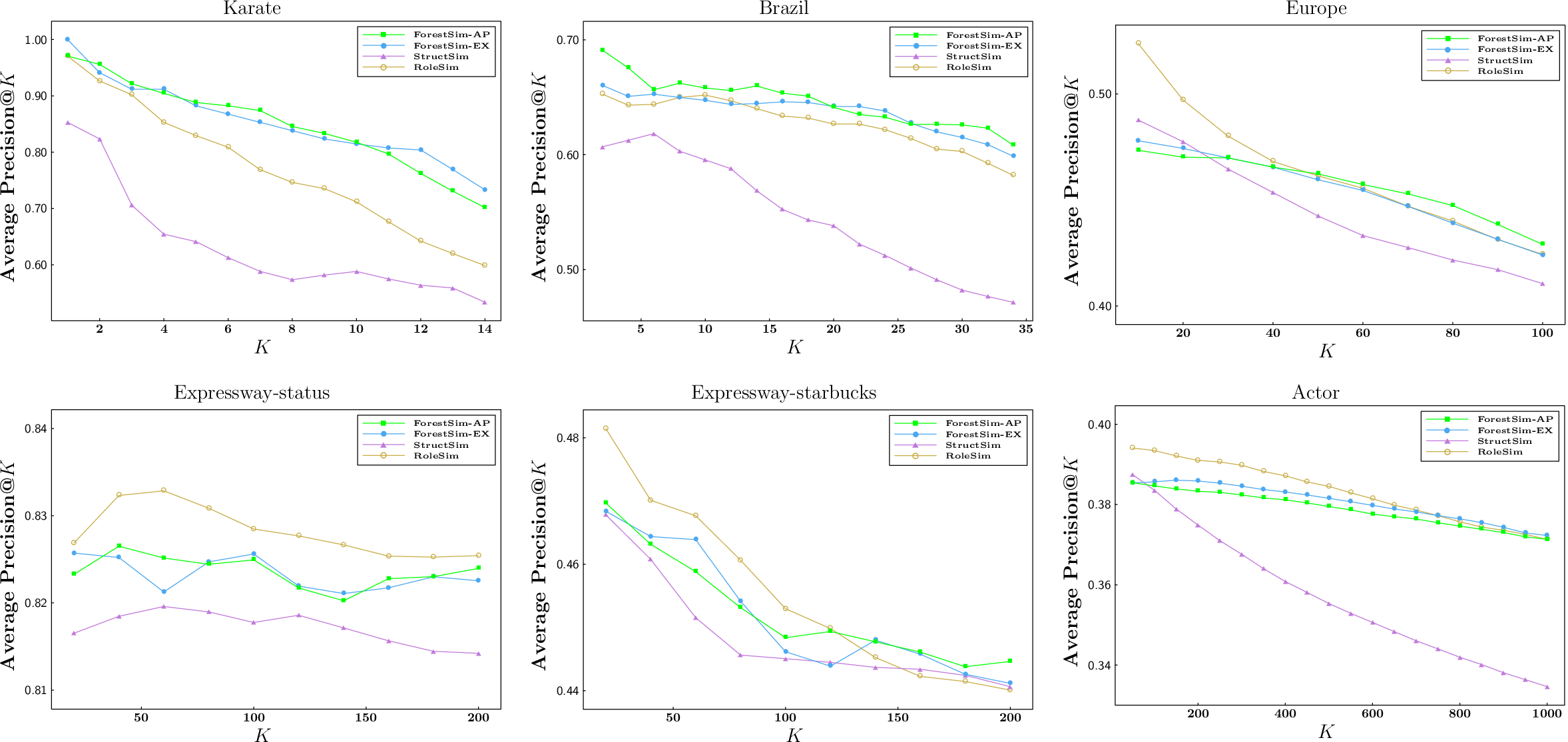}
	\caption{Average Precision@$K$ of the studied role similarity metrics on six real-world networks.}\label{AvgPreK}
\end{figure*}

\subsection{Effectiveness}
We use six labeled networks~\cite{TaSu09, RiSa17, XuGe19} to evaluate the effectiveness of the studied measures. These datasets are extensively used in machine learning studies~\cite{RiSa17, XuGe19, DoZi18}. Each vertex in the datasets has a label related to its topology, and we thus use the labels as the ground truth of roles. Related information about these real-world networks is listed in Table~\ref{exp-data-info1}.

We use Average Precision@$K$, defined by Definition~\ref{avg-preK}, to measure the effectiveness of studied role similarity metrics in top-k similarity search.
\begin{definition}\label{avg-preK}
Let $G=(V,E)$ be a labeled network with $n$ nodes, and let \textsf{Sim} be a node similarity measure. For a node $u \in V$, let $S_{u,k}$ be the set of the top-k similar nodes for $u$ based on \textsf{Sim}, and let $k_u^+$ be the number of nodes that have the same label as $u$ in $S_{u,k}$. Thus, for \textsf{Sim}, its Average Precision@$K$ on $G$ is defined as 
\begin{align*}
	{\rm  Average \; Precision@}K 
	:= \frac{1}{n} \displaystyle\sum_{u \in V} \frac{k_u^+}{k}.
\end{align*}
\end{definition}
Figure~\ref{AvgPreK} shows the Average Precision@$K$ of four studied measures on six labeled networks. Note that the difference between \textsf{ForestSim-EX} and \textsf{ForestSim-AP} is marginal though the latter requires less time and space. Moreover, \textsf{ForestSim} achieves comparable performance to \textsf{RoleSim}, the state-of-art role similarity metric, and it clearly outperforms \textsf{StructSim} in most cases.

In brief, \textsf{ForestSim} achieves much better performance than \textsf{StructSim}, and the difference between \textsf{ForestSim} and \textsf{RoleSim} is marginal in terms of effectiveness.

\section{Conclusion}~\label{S-Conclude}
In this paper, we propose a novel node similarity metric, namely \textsf{ForestSim}, to quickly and effectively process top-k similarity search on large networks. Different from previous frameworks, \textsf{ForestSim}, based on spanning rooted forests of graphs, adopts the average size of all trees rooted at node $u$ in spanning rooted forests to reflect the structural information of $u$. We also provide the top-k similarity search algorithm, namely \textsf{ForestSimSearch}, that can handle a top-k query in $O(k)$ time once the precomputation is finished. Furthermore, we use the fast approximate algorithm to compute the diagonal entries of the forest matrix so that we can finish the precomputation within nearly linear time and nearly linear space. Extensive experimental results on the real-world networks demonstrate that our proposed method works efficiently on large networks and achieves comparable performance to the state-of-art measures.

\iffalse
\section*{Acknowledgment}

This work was supported  in part by the National Natural Science Foundation of China (Nos.  61872093 and U20B2051), Shanghai Municipal Science and Technology Major Project  (Nos.  2018SHZDZX01 and 2021SHZDZX03), ZJ Lab,  Shanghai Center for Brain Science and Brain-Inspired Technology, and the National Key R \& D Program of China (No. 2018YFB1305104).

\fi

\section*{DATA AVAILABILITY STATEMENT}

The data underlying this article are available in Network Repository, at https://networkrepository.com.

\bibliographystyle{compj}
\bibliography{Forest}

\end{document}